\documentclass[journal]{IEEEtran}

\usepackage{amsmath}
\usepackage{amsfonts}
\usepackage{amssymb}
\usepackage{amsthm}
\usepackage{tabularx}
\usepackage{mathrsfs} 

\usepackage{url}
\usepackage[colorlinks]{hyperref}

\newtheorem{Proposition}{Proposition}

\newtheorem{corollary}{Corollary}

\usepackage{stfloats}
\usepackage{float}
\usepackage{graphicx}
\hyphenation{op-tical net-works semi-conduc-tor}
\usepackage{xcolor}
\usepackage{subfigure}

\makeatletter
\def\blfootnote{\xdef\@thefnmark{}\@footnotetext}
\makeatother

\begin{document}

\title{\huge{Fluid Antenna-Aided Rate-Splitting Multiple Access}} 
\author{Farshad~Rostami~Ghadi,~\IEEEmembership{Member},~\textit{IEEE}, 
            Kai-Kit~Wong,~\IEEEmembership{Fellow},~\textit{IEEE}, 
            F.~Javier~L\'{o}pez-Mart\'{i}nez,~\IEEEmembership{Senior~Member},~\textit{IEEE}, 
            Lajos~Hanzo,~\IEEEmembership{Life~Fellow},~\textit{IEEE}, and 
            Chan-Byoung~Chae,~\IEEEmembership{Fellow},~\textit{IEEE}
\vspace{-8mm}
}
\maketitle

\begin{abstract}
This letter considers a fluid antenna system (FAS)-aided rate-splitting multiple access (RSMA) approach for downlink transmission. In particular, a base station (BS) equipped with a single traditional antenna system (TAS) uses RSMA signaling to send information to several mobile users (MUs) each equipped with FAS. To understand the achievable performance, we first present the distribution of the equivalent channel gain based on the joint multivariate $t$-distribution and then derive a compact analytical expression for the outage probability (OP). Moreover, we obtain the asymptotic OP in the high signal-to-noise ratio (SNR) regime. Numerical results show that combining FAS with RSMA significantly outperforms TAS and conventional multiple access schemes, such as non-orthogonal multiple access (NOMA), in terms of OP. The results also indicate that FAS can be the tool that greatly improves the practicality of RSMA.
\end{abstract}

\begin{IEEEkeywords}
Downlink, fluid antenna system (FAS), outage probability, rate-splitting multiple access (RSMA).
\end{IEEEkeywords}

\maketitle


\blfootnote{The work of F. Rostami Ghadi and K. K. Wong is supported by the Engineering and Physical Sciences Research Council (EPSRC) under Grant EP/W026813/1. The work of F. J. L\'opez-Mart\'inez is funded in part by Junta de Andaluc\'ia through grant EMERGIA20-00297, and in part by MICIU/AEI/10.13039/50110001103 and FEDER/UE through grant PID2023-149975OB-I00 (COSTUME). The work of C.-B. Chae is supported by the Institute for Information and Communication Technology Planning and Evaluation (IITP)/NRF grant funded by the Ministry of Science and ICT (MSIT), South Korea, under Grant RS-2024-00428780 and 2022R1A5A1027646.}

\blfootnote{\noindent F. Rostami Ghadi and K. K. Wong are with the Department of Electronic and Electrical Engineering, University College London, London, UK. (e-mail: $\rm \{f.rostamighadi,kai\text{-}kit.wong\}@ucl.ac.uk$). K. K. Wong is also affiliated with Yonsei Frontier Lab, Yonsei University, Seoul, Korea.}
\blfootnote{F. J. L\'opez-Mart\'inez is with the Department of Signal Theory, Networking and Communications, Research Centre for Information and Communication Technologies (CITIC-UGR), University of Granada, 18071, Granada (Spain). (e-mail: $\rm fjlm@ugr.es)$.}
\blfootnote{L. Hanzo is with the School of Electronics and Computer Science, University of Southampton, Southampton, U.K. (e-mail: $\rm lh@ecs.soton.ac.uk$).}
\blfootnote{C. B. Chae is with Yonsei Frontier Lab, Yonsei University, Seoul, Korea. (e-mail: $\rm cbchae@yonsei.ac.kr$).}
	
\blfootnote{Corresponding author: Kai-Kit Wong.}

\vspace{-2mm}
\section{Introduction}\label{sec-intro}
\IEEEPARstart{T}{he rapid} advance of wireless communication systems is driven by the growing demand for sophisticated  multimedia applications. Hence, next-generation networks (NGNs), a.k.a.~the sixth-generation (6G), are aimed to support increased spectral efficiency, massive connectivity, and enhanced quality of service (QoS), including ultra-reliable low-latency communications (URLLC) and real-time data processing \cite{you2023toward}. Among the innovative solutions being explored, such as ultra-massive multiple-input multiple-output (UM-MIMO) systems, reconfigurable intelligent surfaces (RIS), and integrated sensing and communications (ISAC), fluid antenna system (FAS) appears to be the latest breakthrough driven by new reconfigurability in antenna such as shape and position \cite{wong2020fluid,new2024tutorial}.

In contrast to traditional antenna system (TAS), FAS offers a new degree-of-freedom to the physical layer, and therefore can significantly enhance both the diversity and multiplexing gains \cite{Wong-twc2021,new2023information}. By leveraging the unique capabilities of FAS, substantial efforts have recently been devoted to integrating it with cutting-edge technologies, such as MIMO \cite{new2023information,ye2023fluid}, RIS \cite{ghadi2024performance}, and ISAC \cite{wang2024fluid}, among others. Experimental results on FAS have also been reported recently in \cite{Zhang-pFAS2024}.

A key application that FAS may find being very useful is the multiuser communication scenarios where spectrally efficient techniques such as non-orthogonal multiple access (NOMA) and rate-splitting multiple access (RSMA) are utilized. NOMA allows multiple mobile users (MUs) to share the same physical channel to boost its overall spectral efficiency but relies on the use of successive interference cancellation (SIC) at the MUs. Similar to NOMA, RSMA has been piped to do the same or better with a more general and robust transmission framework \cite{mao2018rate}. RSMA splits MUs' data into common and private message components that are transmitted simultaneously, allowing for more effective interference mitigation \cite{clerckx2019rate}. But regardless, interference cancellation is an important part for both NOMA and RSMA, where FAS could shine due to its massive spatial diversity. Note that in \cite{new2023fluid}, FAS has been shown to bring down the power consumption a lot when combined with NOMA.

In this letter, we investigate the integration of FAS with the RSMA scheme, focusing on how this combination can enhance communication reliability and performance in scenarios with multiple MUs and varying channel conditions. Specifically, we examine a multiuser downlink wireless communication system in which MUs employ FAS, while the base station (BS) is equipped with a TAS to adopt the RSMA scheme to simultaneously serve these MUs. We first derive the distribution of the equivalent received channel gain at the FAS-equipped MUs and then characterize the outage probability (OP) in terms of the joint cumulative distribution function (CDF) of the multivariate $t$-distribution using a copula-based formulation. Our analysis reveals that the integration of FAS with RSMA outperforms the configurations where MUs utilize TAS and the BS employs the NOMA signalling. This highlights the advantages of combining fluid antenna technology with innovative multiple access methods to enhance the overall efficiency.

\section{System Model}\label{sec-sys}
\subsection{Channel Model}
Consider the downlink of a FAS-aided RSMA communication setup, which consists of a TAS-equipped BS and $K$ FAS-equipped MUs $\mathrm{u}_k$, for $k=1,\dots,K$. Each MU $\mathrm{u}_k$ features a grid structure consisting of $N_k^l$ ports, uniformly distributed over two linear spaces of length $W_k^l\lambda$ for $l\in\left\{1,2\right\}$, with $\lambda$ being the carrier wavelength. As a result, the total number of ports for MU $\mathrm{u}_k$ is $N_k=N_k^1\times N_k^2$ and the total area of the surface is $W_k=\left(W_k^1\times W_k^2\right)\lambda^2$. To simplify our notations, we introduce a mapping function $\mathcal{F}\left(n_k\right)=\left(n_k^1,n_k^2\right)$, also $n_k=\mathcal{F}^{-1}\left(n_k^1,n_k^2\right)$, which converts the two-dimensional (2D) indices into a one-dimensional (1D) form such that $n_{k}\in\left\{1,\dots,N_k\right\}$ and $n_k^l\in\left\{1,\dots,N_k^l\right\}$. The complex channels from the BS to MU $\mathrm{u}_k$ can be modeled as \cite{new2023fluid}
\begin{multline}
\mathbf{h}_k=\sqrt{\frac{\mathfrak{K}}{\mathfrak{K}+1}}\mathrm{e}^{j\omega}\mathbf{a}\left(\theta_{0,k},\psi_{0,k}\right)\\
+\sqrt{\frac{1}{\mathfrak{L}\left(\mathfrak{K}+1\right)}}\sum_{\mathfrak{l}=1}^\mathfrak{L}\kappa_{\mathfrak{l},k}\mathbf{a}\left(\theta_{\mathfrak{l},k},\psi_{\mathfrak{l},k}\right),
\end{multline}
where $\mathfrak{K}$ represents the Rice factor, $\omega$ denotes the phase of the line-of-sight (LoS) component, $\kappa_{\mathfrak{l},k}$ indicates the complex channel coefficient of the $\mathfrak{l}$-th scattered component, and $\mathfrak{L}$ is the receive steering vector, which is defined as
\begin{align}
&\mathbf{a}\left(\theta_{\mathfrak{l},k},\psi_{\mathfrak{l},k}\right)\notag\\
&=\left[1\quad \mathrm{e}^{j\frac{2\pi W_k^1}{N_k^1-1}\sin\theta_{\mathfrak{l},k}\cos\psi_{\mathfrak{l},k}}\dots \mathrm{e}^{j2\pi W_k^1\sin\theta_{\mathfrak{l},k}\cos\psi_{\mathfrak{l},k}}\right]^T\notag\\
&\otimes \left[1\quad \mathrm{e}^{j\frac{2\pi W_k^2}{N_k^2-1}\sin\theta_{\mathfrak{l},k}\cos\psi_{\mathfrak{l},k}}\dots \mathrm{e}^{j2\pi W_k^2\sin\theta_{\mathfrak{l},k}\cos\psi_{\mathfrak{l},k}}\right],
\end{align}
where $\theta_{\mathfrak{l},k}$ and $\psi_{\mathfrak{l},k}$ denote the azimuth and elevation angle-of-arrival, respectively, and $\otimes$ is the Kronecker tensor product. It is worth mentioning that by assuming  $\mathfrak{L}\rightarrow\infty$ and $\mathfrak{K}=0$, the magnitude of each entry of $\mathbf{h}_k$ follows Rayleigh fading. Therefore, given that the fluid antenna ports can freely switch to any position and be arbitrarily close to each other, the corresponding channels are spatially correlated. More specifically, the covariance between two arbitrary ports $n_k=\mathcal{F}^{-1}\left(n_k^1,n_k^2\right)$ and $\tilde{n}_k=\mathcal{F}^{-1}\left(\tilde{n}_k^1,\tilde{n}_k^2\right)$ at MU $\mathrm{u}_k$ for the considered scenario under rich scattering is given by \cite{new2023information}
\begin{align}
\varrho_{n_k,\tilde{n}_k}=\mathcal{J}_0\left(2\pi\sqrt{\left(\frac{n_k^1-\tilde{n}_k^1}{N^1_k-1}W_k^1\right)^2+\left(\frac{n_k^2-\tilde{n}_k^2}{N^2_k-1}W_k^2\right)^2}\right),
\end{align}
where $\mathcal{J}_0\left(\cdot\right)$ defines the zero-order spherical Bessel function of the first kind.

\subsection{Signal Model}
It is assumed that the BS exploits RSMA signalling for simultaneously supporting $K$ MUs such that messages $w_k$ are transmitted to MUs $\mathrm{u}_k$. Given the RSMA principle, $w_k$ is split into the common message $w_{\mathrm{c}}$ and private message $w_{\mathrm{p},k}$. A shared codebook between MUs is used to jointly encode $w_{\mathrm{c}}$ into a common stream $s_\mathrm{c}$, which must be decoded by all MUs. Simultaneously, $w_{\mathrm{p},k}$ are encoded into the respective private streams $s_\mathrm{p}$. Consequently, the signal transmitted by the BS to the $n$-th port of each MU is defined as
\begin{align}
x^{n_k}=\sqrt{P}\left(\sqrt{\alpha_\mathrm{c}}w_\mathrm{c}+\sum_{k=1}^K\sqrt{\alpha_{\mathrm{p},k}}w_{\mathrm{p},k}\right),
\end{align}
where $P$ denotes the transmit power, while $\alpha_\mathrm{c}$ and $\alpha_{\mathrm{p},k}$ represent the power allocation factors for $s_\mathrm{c}$ and $s_\mathrm{p}$, respectively, so that $\alpha_\mathrm{c}+\sum_{k=1}^K\alpha_{\mathrm{p},k}=1$. Therefore, the received signal at the $n$-th port of the $k$-th MU can be expressed as
\begin{align}
y_k^{n_k}&=h_k^{n_k}x^{n_k}+z^{n_k}_k\notag\\
&=\underset{\text{common message}}{\underbrace{\sqrt{PL_k\alpha_\mathrm{c}}h_k^{n_k}w_\mathrm{c}}}+\underset{\text{private message}}{\underbrace{\sqrt{PL_k\alpha_{\mathrm{p},k}}h_k^{n_k}w_{\mathrm{p},k}}}\notag\\
&\quad\quad+\underset{\text{interference}}{\underbrace{\sum_{\tilde{k}=1\atop \tilde{k}\neq k}^K\sqrt{PL_k\alpha_{\mathrm{p},\tilde{k}}}h_k^{n_k}w_{\mathrm{p},\tilde{k}}}}+z_k^{n_k}, \label{eq-y}
\end{align}
in which $h_k^{n_k}$ is the fading channel coefficient at the $n$-th port of MU $\mathrm{u}_k$, $z_k^{n_k}$ denotes the complex additive white Gaussian noise (AWGN) with zero mean and variance of $\sigma^2$, and $L_k=d_k^{-\beta}$ represents the path-loss, where $\beta>2$ is the path-loss exponent and $d_k$ denotes the Euclidean distance between the BS placed at $\left(X_0,Y_0,Z_0\right)$ and the $k$-th MU located at $\left(X_k,Y_k,Z_k\right)$, which can be determined as 
\begin{align}
d_k = \sqrt{\left(X_0-X_k\right)^2+\left(Y_0-Y_k\right)^2+\left(Z_0-Z_k\right)^2}.
\end{align}

\subsection{Signal-to-Interference Plus Noise Ratio (SINR)}
It is evident from \eqref{eq-y} that in addition to the common and intended private messages, each MU also receives the private messages intended for other MUs. This contributes to the interference level, complicating the decoding of the desired messages. To mitigate this, each MU employs a two-step decoding process for extracting the intended information from the received signal. In the initial step, the MU focuses on decoding the common message, while treating all private messages as noise. Following the FAS concept, we assume that only the optimal port that maximizes the received SINR at the FAS-equipped MUs is activated. Therefore, the SINR for the $k$-th MU can be mathematically expressed as
\begin{align}
\gamma_{\mathrm{c},k}=\frac{\overline{\gamma}\alpha_\mathrm{c}L_k\left|h_k^{n^*_k}\right|^2}{\overline{\gamma}\left(1-\alpha_\mathrm{c}\right)L_k\left|h_k^{n^*_k}\right|^2+1},\label{sinr-c}
\end{align}
in which $\overline{\gamma}=\frac{P}{\sigma^2}$ defines the average transmit signal-to-noise ratio (SNR) and $n_k^*$ denotes the best port index at MU $\mathrm{u}_k$ that maximizes the channel gain, i.e., 
\begin{align}
n_k^*=\underset{n}{\arg\max}\left\{\left|h_k^{n_k}\right|^2\right\}.
\end{align}
Thus, the equivalent channel gain at MU $\mathrm{u}_k$ is given by
\begin{align}
g_{\mathrm{fas},k}=\max\left\{g_k^1,\dots,g_k^{n_k}\right\}, \label{eq-gfas}
\end{align}
where $g_k^{n_k}=\left|h_k^{n_k}\right|^2$.

In the second stage of the RSMA scheme, after successfully decoding the common message, each MU moves on to decode its own private message. This is achieved by subtracting the decoded common message from the received signal, while treating the private messages intended for other MUs as noise. At this point, the MU focuses on isolating its desired private message. Hence, the SINR for decoding the private message at the $k$-th FAS-equipped MU can then be expressed as
\begin{align}
\gamma_{\mathrm{p},k}=\frac{\overline{\gamma}\alpha_{\mathrm{p},k}L_k\left|h_k^{n^*_k}\right|^2}{\overline{\gamma}L_k\left|h_k^{n^*_k}\right|^2\sum_{\tilde{k}=1\atop \tilde{k}\neq k}^K \alpha_{\mathrm{p},k}+1}. \label{sinr-p}
\end{align}

\section{Performance Analysis}
Here, we first derive the CDF and the probability density function (PDF) of the received SINR at the MUs, for the case of Rayleigh fading. We then provide the OP expression, along with its asymptotic form in the high-SNR regime.

\subsection{SINR Distribution}
Given \eqref{eq-gfas}, the CDF of the channel gain at the FAS-RSMA MU $\mathrm{u}_k$ can be mathematically defined as
\begin{align}
F_{g_{\mathrm{fas},k}}\left(g\right)& = \Pr\left(\max\left\{g_k^1,\dots,g_k^{n_k}\right\}\leq g\right)\\
&=\Pr\left(g_k^1\leq g,\dots, g_k^{n_k}\leq g\right)\\
&=F_{g_k^1,\dots,g_k^{n_k}}\left(g,\dots,g\right), \label{eq-cdf1}
\end{align}
where \eqref{eq-cdf1} is the joint multivariate CDF of $N_k$ correlated exponential random variables (RVs) due to the spatial correlation between the corresponding Rayleigh channel coefficients. It is evident that deriving the closed-from expression of $F_{g_{\mathrm{fas},k}}\left(g\right)$ is mathematically intractable. Nevertheless, the joint CDF of arbitrary multivariate RVs can be efficiently generated by using Sklar's theorem such that for $d$ arbitrary correlated RVs $S_i$, $i=1,\dots,d$, associated with the univariate marginal CDF $F_{S_i}\left(s_i\right)$, the  joint multivariate CDF $F_{S_1,\dots,S_d}\left(s_1,\dots,s_d\right)$,  in the extended real line domain $\overline{\mathbb{R}}$, is given by \cite{ghadi2020copula}
\begin{align}
F_{S_1,\dots,S_d}\left(s_1,\dots,s_d\right) = C\left(F_{S_1}\left(s_1\right),\dots,F_{S_d}\left(s_d\right);\Theta\right). \label{eq-sklar}
\end{align}
In \eqref{eq-sklar},  $C(\cdot):\left[0,1\right]^d\rightarrow\left[0,1\right]$ denotes the copula function
that is a joint CDF of $d$ random vectors on the unit cube $\left[0,1\right]^d$ associated with uniform marginal distributions, i.e.,
\begin{align}
C\left(u_1,\dots,u_d;\Theta\right)=\Pr\left(U_1\leq u_1,\dots,U_d\leq u_d\right).
\end{align}
in which $u_i=F_{S_i}\left(s_i\right)$ and $\Theta$ denotes the dependence parameter, which characterizes the dependency between the respective correlated RVs. Following \cite{ghadi2020copula}, it was revealed in \cite{ghadi2024gaussian} that the elliptical copula  accurately captures the spatial correlation in FAS, where the corresponding CDF can be expressed in terms of the Gaussian copula, i.e., the joint multivariate CDF of normal distribution. Although the Gaussian copula is capable of accurately describing the structure of dependency between the fading coefficients, it cannot cover heavy tail dependencies, where deep fading occurs. To overcome this issue, we utilize a more general elliptical student-$t$ copula, which includes an additional parameter $\nu_k$ in addition to the correlation matrix, to control the degree of tail dependencies.  

\begin{Proposition}\label{pro-dist}
The CDF and PDF of the equivalent channel gain $g_{\mathrm{fas},k}$ at MU $\mathrm{u}_k$ for FAS-RSMA are given by
\begin{align}
&F_{g_{\mathrm{fas},k}}\left(g\right)=\notag\\
& T_{\nu_k,\mathbf{\Sigma}_k}\left(t_{\nu_k}^{-1}\left(1-\mathrm{e}^{-\frac{g}{\overline{g}}}\right),\dots,t_{\nu_k}^{-1}\left(1-\mathrm{e}^{-\frac{g}{\overline{g}}}\right);\nu_k,\Theta_k\right), \label{eq-cdf}
\end{align}
and
\begin{align}
&f_{g_{\mathrm{fas},k}}\left(g\right) =\notag\\
& \frac{\Gamma\left(\frac{\nu_k+{N_k}}{2}\right)}{\Gamma\left(\frac{\nu_k}{2}\right)\sqrt{\left(\pi\nu_k\right)^{N_k}\left|\mathbf{\Sigma}_k\right|}}\left(1+\frac{1}{\nu_k}\left(\mathbf{t}^{-1}_{\nu_k}\right)^T\mathbf{\Sigma}_k^{-1}\mathbf{t}^{-1}_{\nu_k}\right)^{-\frac{\nu_k+N_k}{2}}, \label{eq-pdf}
\end{align}
where $\overline{g}=\mathbb{E}\left[g\right]$ is the mean of the channel gain,  $t_{\nu_k}^{-1}\left(\cdot\right)$ is the inverse CDF (quantile function) of the univariate $t$-distribution having $\nu_k$ degrees of freedom for the $k$-th MU,  $T_{\nu_k,\mathbf{\Sigma}_k}\left(\cdot\right)$ represents the CDF of the multivariate $t$-distribution with correlation matrix $\mathbf{\Sigma}_k$ and $\nu_k$ degrees of freedom for the $k$-th MU, and $\Theta_k\in\left[-1,1\right]$ denotes the dependence parameter of the $t$-student copula, which represents the correlation between the $n_k$-th and $\tilde{n}_k$-th ports in $\mathbf{\Sigma}_k$. Moreover, $\Gamma\left(.\right)$ is the Gamma function, $\left|\mathbf{\Sigma}_k\right|$ defines the determinant of $\mathbf{\Sigma}_k$, and
\begin{align}
\mathbf{t}^{-1}_{\nu_k} = \left[t^{-1}_{\nu_k}\left(1-\mathrm{e}^{-\frac{g}{\overline{g}}}\right),\dots,t^{-1}_{\nu_k}\left(1-\mathrm{e}^{-\frac{g}{\overline{g}}}\right)\right].
\end{align}
\end{Proposition}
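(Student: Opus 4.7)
The plan is to start from the max representation in \eqref{eq-gfas}, which immediately collapses the CDF of $g_{\mathrm{fas},k}$ onto the joint CDF of the $N_k$ correlated port gains evaluated on the diagonal, as already done in \eqref{eq-cdf1}. The task therefore reduces to producing a tractable expression for $F_{g_k^1,\dots,g_k^{N_k}}(g,\dots,g)$ once the port-gain dependence is modelled by the student-$t$ copula motivated in the paragraph just above the proposition.

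First I would fix the marginals. In the Rayleigh limit ($\mathfrak{L}\to\infty$, $\mathfrak{K}=0$) noted after the channel model, each $|h_k^{n_k}|$ is Rayleigh-distributed, so $g_k^{n_k}=|h_k^{n_k}|^2$ is exponential with mean $\overline{g}$ and marginal CDF $F_{g_k^{n_k}}(g)=1-\mathrm{e}^{-g/\overline{g}}$. Next I would invoke Sklar's theorem in the form \eqref{eq-sklar}, taking the copula $C$ to be the student-$t$ copula. Since the student-$t$ copula is defined via the multivariate $t$-distribution as
\begin{equation*}
C(u_1,\dots,u_d;\nu,\mathbf{\Sigma})=T_{\nu,\mathbf{\Sigma}}\bigl(t_\nu^{-1}(u_1),\dots,t_\nu^{-1}(u_d)\bigr),
\end{equation*}
substituting $u_i=1-\mathrm{e}^{-g/\overline{g}}$ for every $i\in\{1,\dots,N_k\}$ delivers \eqref{eq-cdf} by a direct rewriting, with no analytic manipulation beyond composition of functions.

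For the density \eqref{eq-pdf}, I would read the right-hand side as the multivariate $t$-PDF evaluated at the quantile-transformed diagonal vector $\mathbf{t}^{-1}_{\nu_k}$. The cleanest path is the Sklar density representation: the joint PDF of $(g_k^1,\dots,g_k^{N_k})$ equals the $t$-copula density times the product of the exponential marginal densities. After substituting $u_i=1-\mathrm{e}^{-g/\overline{g}}$ on the diagonal and using the closed-form expression of the $t$-copula density (a multivariate $t$-PDF divided by the product of univariate $t$-PDFs at the quantile points), the quantile Jacobians and the univariate $t$-PDF terms cancel the exponential marginal densities exactly, so that only the multivariate $t$-density at $\mathbf{t}^{-1}_{\nu_k}$ survives, which is precisely \eqref{eq-pdf}.

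The main obstacle I anticipate is not the algebra but the calibration of the dependence parameters. The correlation matrix $\mathbf{\Sigma}_k$ parametrising the $t$-copula is a latent, normal-scores-like dependence structure rather than the linear correlation of the exponential port gains themselves, so the Bessel-type spatial covariance $\varrho_{n_k,\tilde{n}_k}=\mathcal{J}_0(\cdot)$ must be mapped onto the entries of $\mathbf{\Sigma}_k$ and paired with a choice of $\nu_k$ that reproduces the deep-fade (tail) behaviour of the underlying correlated-Rayleigh model. Justifying that this identification yields a faithful $t$-copula representation, rather than a mere fit, is the delicate step; everything else reduces to substitution into the known $t$-copula formulae.
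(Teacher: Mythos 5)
Your CDF derivation is exactly the paper's: Appendix~\ref{app-pro-dist} applies Sklar's theorem \eqref{eq-sklar} to \eqref{eq-cdf1}, fixes the exponential marginals $F_{g_k^{n_k}}(g)=1-\mathrm{e}^{-g/\overline{g}}$, and substitutes them into the student-$t$ copula \eqref{eq-student} to obtain \eqref{eq-cdf}. The calibration issue you flag ($\mathbf{\Sigma}_k$ being a latent copula correlation rather than the linear correlation of the gains) is handled in the paper by citation: the elliptical-copula dependence parameter is taken to satisfy $\Theta_k\approx\varrho_{n_k,\tilde{n}_k}$, so it enters as a modelling assumption rather than something to be proved. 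On this half you and the paper coincide.

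The PDF step is where there is a genuine gap. The object $f_{g_{\mathrm{fas},k}}(g)$ is the density of the \emph{maximum}, i.e.\ $\tfrac{\mathrm{d}}{\mathrm{d}g}F_{g_{\mathrm{fas},k}}(g)$ with $F_{g_{\mathrm{fas},k}}(g)=T_{\nu_k,\mathbf{\Sigma}_k}\left(q(g),\dots,q(g)\right)$ and $q(g)=t_{\nu_k}^{-1}\left(1-\mathrm{e}^{-g/\overline{g}}\right)$; the chain rule gives $q'(g)\sum_{n=1}^{N_k}\partial_n T_{\nu_k,\mathbf{\Sigma}_k}\left(q(g),\dots,q(g)\right)$, a weighted sum of first-order partials of the multivariate $t$ CDF. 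What you compute instead is the joint density of the vector $\left(g_k^1,\dots,g_k^{N_k}\right)$ on the diagonal — the $N_k$-fold mixed partial of the joint CDF — which is a different quantity from the derivative of its diagonal restriction. Moreover, the cancellation you invoke does not occur: the Sklar density representation yields the multivariate $t$ density at $\mathbf{t}^{-1}_{\nu_k}$ multiplied by $\prod_{n}\left(f_{g_k^{n}}(g)/f_{t_{\nu_k}}\left(q(g)\right)\right)$, and the ratio of an exponential marginal density to a univariate $t$ density at the matched quantile is not unity, so the multivariate $t$ density does not ``survive alone.'' To be fair, the paper's own justification of \eqref{eq-pdf} is a one-line appeal to the copula PDF and the chain rule, and the printed formula is likewise the bare multivariate $t$ density with no Jacobian factors; your argument reverse-engineers that printed expression, but neither route actually produces $\tfrac{\mathrm{d}}{\mathrm{d}g}$ of \eqref{eq-cdf}. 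A defensible proof should differentiate the diagonal CDF directly and retain the $q'(g)\sum_n\partial_n T$ structure (or explicitly keep the Jacobian product if the joint-density route is intended).
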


\begin{proof}
See Appendix \ref{app-pro-dist}.
\end{proof}

\subsection{OP Analysis}
The OP is a key performance metric in wireless communications, defined as the probability that the received SNR or SINR falls below a critical threshold, causing a disruption of reliable communication. In the RSMA-based signaling model, since each MU receives a combination of a common message and its own private message, along with the private messages intended for other MUs, and decodes both messages through a two-step decoding process, an outage occurs when the SINRs of decoding either the common or private message drop below their respective thresholds denoted as $\gamma_\mathrm{th,c}$ for the common message and $\gamma_\mathrm{th,p}$ for the private message.

\begin{Proposition}
The OP of MU $k$ in FAS-RSMA is given by
\begin{align}
&P_{\mathrm{o},k}=\notag\\
& T_{\nu_k,\mathbf{\Sigma}_k}\left(t_{\nu_k}^{-1}\left(1-e^{-\frac{\gamma_\mathrm{th}^k}{\overline{g}}}\right),\dots,t_{\nu_k}^{-1}\left(1-e^{-\frac{\gamma_\mathrm{th}^k}{\overline{g}}}\right);\nu_k,\Theta_k\right),\label{eq-op}
\end{align}
where $\gamma_\mathrm{th}^k=\max\left\{\hat{\gamma}_\mathrm{th,c}^k,\hat{\gamma}_\mathrm{th,p}^k\right\}$ in which
\begin{align}\notag
\hat{\gamma}_\mathrm{th,c}^k = \frac{\gamma_{\mathrm{th,c}}^k}{\overline{\gamma}L_k\left(\alpha_\mathrm{c}-\left(1-\alpha_\mathrm{c}\right)\gamma_\mathrm{th,c}^k\right)}
\end{align}
and
\begin{align}
\hat{\gamma}_\mathrm{th,p}^k = \frac{\gamma_{\mathrm{th,p}}^k}{\overline{\gamma}L_k\left(\alpha_{\mathrm{p},k}-\left(1-\alpha_\mathrm{c}-\alpha_{\mathrm{p},k}\right)\gamma_\mathrm{th,p}^k\right)}.
\end{align}
\end{Proposition}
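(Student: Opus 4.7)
The plan is to express the outage event as a probability on the single scalar $g_{\mathrm{fas},k}=\lvert h_k^{n_k^*}\rvert^2$ and then invoke Proposition~\ref{pro-dist}. Since an outage at MU $k$ occurs whenever either stage of the two-step RSMA decoder fails, the event of interest is the union $\{\gamma_{\mathrm{c},k}<\gamma_\mathrm{th,c}^k\}\cup\{\gamma_{\mathrm{p},k}<\gamma_\mathrm{th,p}^k\}$; equivalently, the ``no-outage'' event requires both SINRs to exceed their respective thresholds simultaneously. Inspecting \eqref{sinr-c} and \eqref{sinr-p}, one sees that both $\gamma_{\mathrm{c},k}$ and $\gamma_{\mathrm{p},k}$ are strictly increasing functions of the scalar $g_{\mathrm{fas},k}$ (provided the relevant denominator coefficients are positive), so each SINR constraint is equivalent to a pure lower bound on $g_{\mathrm{fas},k}$.

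The second step is to invert the two SINR inequalities explicitly. Cross-multiplying $\gamma_{\mathrm{c},k}\geq\gamma_\mathrm{th,c}^k$ in \eqref{sinr-c} by the positive denominator and solving for $g_{\mathrm{fas},k}$ yields precisely $g_{\mathrm{fas},k}\geq\hat{\gamma}_\mathrm{th,c}^k$. For the private-message inequality in \eqref{sinr-p}, I would first invoke the RSMA power constraint $\alpha_\mathrm{c}+\sum_{k=1}^K\alpha_{\mathrm{p},k}=1$ to rewrite the interference sum as $\sum_{\tilde{k}\neq k}\alpha_{\mathrm{p},\tilde{k}}=1-\alpha_\mathrm{c}-\alpha_{\mathrm{p},k}$, after which the same cross-multiplication and rearrangement produces $g_{\mathrm{fas},k}\geq\hat{\gamma}_\mathrm{th,p}^k$.

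Because both success conditions are lower bounds on the \emph{same} scalar random variable, their intersection collapses to the single constraint $g_{\mathrm{fas},k}\geq\max\{\hat{\gamma}_\mathrm{th,c}^k,\hat{\gamma}_\mathrm{th,p}^k\}=\gamma_\mathrm{th}^k$. Consequently the outage probability reduces to $P_{\mathrm{o},k}=\Pr(g_{\mathrm{fas},k}<\gamma_\mathrm{th}^k)=F_{g_{\mathrm{fas},k}}(\gamma_\mathrm{th}^k)$, and substituting $g=\gamma_\mathrm{th}^k$ into \eqref{eq-cdf} of Proposition~\ref{pro-dist} immediately delivers the claimed closed-form expression in terms of the multivariate $t$-CDF.

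The main obstacle is essentially bookkeeping rather than a deep probabilistic step. One has to verify that the SINR denominator coefficients $\alpha_\mathrm{c}-(1-\alpha_\mathrm{c})\gamma_\mathrm{th,c}^k$ and $\alpha_{\mathrm{p},k}-(1-\alpha_\mathrm{c}-\alpha_{\mathrm{p},k})\gamma_\mathrm{th,p}^k$ are strictly positive, since otherwise the corresponding SINR cannot exceed its threshold for \emph{any} channel realization and the OP trivially equals one; this positivity condition is what guarantees that $\hat{\gamma}_\mathrm{th,c}^k$ and $\hat{\gamma}_\mathrm{th,p}^k$ are legitimate finite thresholds. No ingredients beyond elementary algebra and Proposition~\ref{pro-dist} are required.
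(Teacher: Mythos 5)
Your proposal is correct and follows essentially the same route as the paper's own proof: rewrite the no-outage event as two lower bounds on the scalar $g_{\mathrm{fas},k}$, collapse them into the single threshold $\gamma_\mathrm{th}^k=\max\{\hat{\gamma}_\mathrm{th,c}^k,\hat{\gamma}_\mathrm{th,p}^k\}$, and evaluate the CDF of Proposition~\ref{pro-dist} at that point. Your added remarks --- using the power constraint to write the interference sum as $1-\alpha_\mathrm{c}-\alpha_{\mathrm{p},k}$ and checking positivity of the denominator coefficients so the inversion is legitimate --- are details the paper leaves implicit, but they do not change the argument.
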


\begin{proof}
Given the definition, the OP at the $k$-th FAS-equipped RSMA MU can be expressed mathematically as
\begin{subequations}
\begin{align}
P_{\mathrm{o},k}& = 1-\Pr\left(\gamma_{\mathrm{c},k}>\gamma_\mathrm{th,c}^k,\gamma_{\mathrm{p},k}>\gamma_\mathrm{th,p}^k\right) \label{p-a}\\ \notag
&=1-\Pr\Bigg(\frac{\overline{\gamma}\alpha_\mathrm{c}L_kg_{\mathrm{fas},k}}{\overline{\gamma}\left(1-\alpha_\mathrm{c}\right)L_kg_{\mathrm{fas},k}+1}>\gamma_\mathrm{th,c}^k,\label{p-b}\\
&\quad\quad\quad\frac{\overline{\gamma}\alpha_{\mathrm{p},k}L_kg_{\mathrm{fas},k}}{\overline{\gamma}L_kg_{\mathrm{fas},k}\sum_{\tilde{k}=1, \tilde{k}\neq k}^K \alpha_{\mathrm{p},k}+1}>\gamma_\mathrm{th,p}^k\Bigg)\\
& = 1-\Pr\left(g_{\mathrm{fas},k}>\hat{\gamma}_\mathrm{th,c}^k,g_{\mathrm{fas},k}>\hat{\gamma}_\mathrm{th,p}^k\right)\\
& = \Pr\left(g_{\mathrm{fas},k}\leq \max\left\{\hat{\gamma}_\mathrm{th,c}^k,\hat{\gamma}_\mathrm{th,p}^k\right\}\right)\\
& = F_{g_{\mathrm{fas},k}}\left(\gamma_\mathrm{th}^k\right),\label{p-e}
\end{align}
\end{subequations}
where \eqref{p-b} is derived by substituting \eqref{sinr-c} and \eqref{sinr-p} into \eqref{p-a}, while \eqref{p-e} is derived by using \eqref{eq-cdf}.
\end{proof}

\begin{corollary}
The asymptotic OP in the high-SNR regime for the considered FAS-RSMA is given by
\begin{align}
P_{\mathrm{o},k}^\infty=T_{\nu_k,\mathbf{\Sigma}_k}\left(t_{\nu_k}^{-1}\left(\frac{\gamma_\mathrm{th}}{\overline{g}}\right),\dots,t_{\nu_k}^{-1}\left(\frac{\gamma_\mathrm{th}}{\overline{g}}\right);\nu_k,\Theta_k\right). \label{eq-op-asym}
\end{align}
\end{corollary}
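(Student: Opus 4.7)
The plan is to derive \eqref{eq-op-asym} directly from \eqref{eq-op} by specializing to the regime $\overline{\gamma}\to\infty$ and performing a first-order expansion of the exponential inside the inverse $t$-quantile. The only substantive observation we need is that the threshold $\gamma_\mathrm{th}^k=\max\{\hat{\gamma}_\mathrm{th,c}^k,\hat{\gamma}_\mathrm{th,p}^k\}$ scales as $\mathcal{O}(1/\overline{\gamma})$, because both $\hat{\gamma}_\mathrm{th,c}^k$ and $\hat{\gamma}_\mathrm{th,p}^k$ have $\overline{\gamma}L_k$ in the denominator while the numerator and the bracketed denominator terms are all SNR-independent constants (in particular, for fixed $\alpha_\mathrm{c}$, $\alpha_{\mathrm{p},k}$ and thresholds $\gamma_\mathrm{th,c}^k$, $\gamma_\mathrm{th,p}^k$, the denominator factors $\alpha_\mathrm{c}-(1-\alpha_\mathrm{c})\gamma_\mathrm{th,c}^k$ and $\alpha_{\mathrm{p},k}-(1-\alpha_\mathrm{c}-\alpha_{\mathrm{p},k})\gamma_\mathrm{th,p}^k$ are strictly positive by feasibility). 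Hence $\gamma_\mathrm{th}^k/\overline{g}\to 0$ as $\overline{\gamma}\to\infty$.

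Next, I would invoke the Maclaurin expansion $1-\mathrm{e}^{-x}=x+\mathcal{O}(x^2)$ to replace each argument of $t_{\nu_k}^{-1}(\cdot)$ in \eqref{eq-op}:
\begin{equation}
1-\mathrm{e}^{-\gamma_\mathrm{th}^k/\overline{g}}=\frac{\gamma_\mathrm{th}^k}{\overline{g}}+\mathcal{O}\!\left(\frac{1}{\overline{\gamma}^{2}}\right).
\end{equation}
Since the univariate $t$ quantile function $t_{\nu_k}^{-1}$ is continuous on $(0,1)$ and the multivariate $t$ CDF $T_{\nu_k,\mathbf{\Sigma}_k}$ is continuous in each of its arguments, one may substitute the leading-order approximant into \eqref{eq-op} and take $\overline{\gamma}\to\infty$ to obtain the stated expression \eqref{eq-op-asym}. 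The remainder carried by the $\mathcal{O}(\overline{\gamma}^{-2})$ term vanishes faster than the leading-order term by an order in $1/\overline{\gamma}$, so it is absorbed in the asymptotic equivalence.

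The only mildly delicate point, and what I would flag as the main obstacle, is justifying that the single-variable replacement inside each slot of $T_{\nu_k,\mathbf{\Sigma}_k}$ preserves the asymptotic behaviour in the joint sense. This is easily handled because all $N_k$ arguments of the multivariate CDF are identical, so the substitution reduces to a one-dimensional limit; continuity of $T_{\nu_k,\mathbf{\Sigma}_k}$ (as a composition of a continuous copula with continuous marginals) together with continuity of $t_{\nu_k}^{-1}$ then yields \eqref{eq-op-asym} directly. No tail-behaviour of the multivariate $t$-distribution or of the inverse quantile near $0$ needs to be quantified to conclude the corollary as stated, because the result is expressed in terms of the same function class as the exact OP rather than as an explicit diversity order.
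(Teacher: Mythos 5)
Your proposal is correct and follows essentially the same route as the paper: a first-order Maclaurin expansion $1-\mathrm{e}^{-x}\approx x$ of the exponential marginal CDF as $\overline{\gamma}\to\infty$, substituted into the exact OP expression \eqref{eq-op}. Your added justifications (that $\gamma_\mathrm{th}^k=\mathcal{O}(1/\overline{\gamma})$ so the argument indeed tends to zero, and that continuity of $t_{\nu_k}^{-1}$ and $T_{\nu_k,\mathbf{\Sigma}_k}$ licenses the substitution) are sound and in fact make the argument slightly more careful than the paper's one-line version.
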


\begin{proof}
Given that the channel gains are distributed exponentially, we can derive their respective CDFs at high SNR by employing the Taylor series expansion. Hence, when  $\overline{\gamma}\rightarrow\infty$, the term $\mathrm{e}^{\gamma_\mathrm{th}/\overline{g}}$ becomes very small, so we have $F_{g_k^{n_k}}^\infty\left(\gamma_\mathrm{th}\right)\approx \gamma_\mathrm{th}/\overline{g}$. Consequently, by inserting the obtained marginal CDF $F_{g_k^{n_k}}^\infty\left(\gamma_\mathrm{th}\right)$ into \eqref{eq-op}, the proof is completed.
\end{proof}


\section{Numerical Results}\label{sec-num}
Here, we evaluate the OP performance of FAS-RSMA. Unless otherwise specified, we set the system parameters to $\beta=2.1$, $\alpha_\mathrm{c}=0.7$, $W_k=1\lambda^2$, $N_k=4$, $\nu=40$, and $\gamma_{\mathrm{th,c}}^k=\gamma_{\mathrm{th,p}}^k=0$ dB. For the sake of simplicity in the simulations, we consider a two-user FAS-RSMA, i.e., $K=2$, where the BS is placed at the origin $\left(0,0,0\right)$, and MUs $\mathrm{u_1}$ and $\mathrm{u_2}$ are positioned at $\left(50,50,0\right)$m and $\left(10,50,0\right)$m, respectively. The corresponding power allocation factors for the private messages are $\alpha_{\mathrm{p},1}=0.75\left(1-\alpha_{\mathrm{c}}\right)$ and $\alpha_{\mathrm{p},2}=0.25\left(1-\alpha_{\mathrm{c}}\right)$. Furthermore, we use the MATLAB function \texttt{copulacdf} to implement the $t$-student copula, which is expressed in terms of the joint CDF of the multivariate $t$-distribution. We also consider the following schemes as the benchmark:
\begin{itemize}
\item \textbf{FAS-NOMA:} MUs employ a FAS-aided NOMA.
\item \textbf{TAS-RSMA:} MUs rely on a TAS-aided RSMA.
\item \textbf{TAS-NOMA:} MUs adopt a TAS-aided NOMA.
\end{itemize}

Fig.~\ref{fig_out} shows the OP results against the average transmit SNR $\overline{\gamma}$ for various schemes, including FAS-RSMA, FAS-NOMA, TAS-RSMA, and TAS-NOMA. More specifically, in Fig.~\ref{fig_out_n}, the OP is analyzed for varying the numbers of fluid antenna ports $N_k$, while maintaining a constant fluid antenna area of $W_k=1\lambda^2$. Meanwhile, Fig.~\ref{fig_out_w} examines the OP for different fluid antenna sizes, keeping the number of ports fixed at $N_k=9$. Firstly, we can observe that the asymptotic results closely align with the OP curves in the high-SNR regime, confirming the accuracy of our theoretical analysis for both MUs. We can also see that as $\overline{\gamma}$ increases, the OP decreases for all schemes, which aligns with our expectations, meaning that improved signal quality results in reduced outages. 

\begin{figure}\vspace{0cm}
\centering
\hspace{0cm}\subfigure[]{%
\includegraphics[width=0.85\columnwidth]{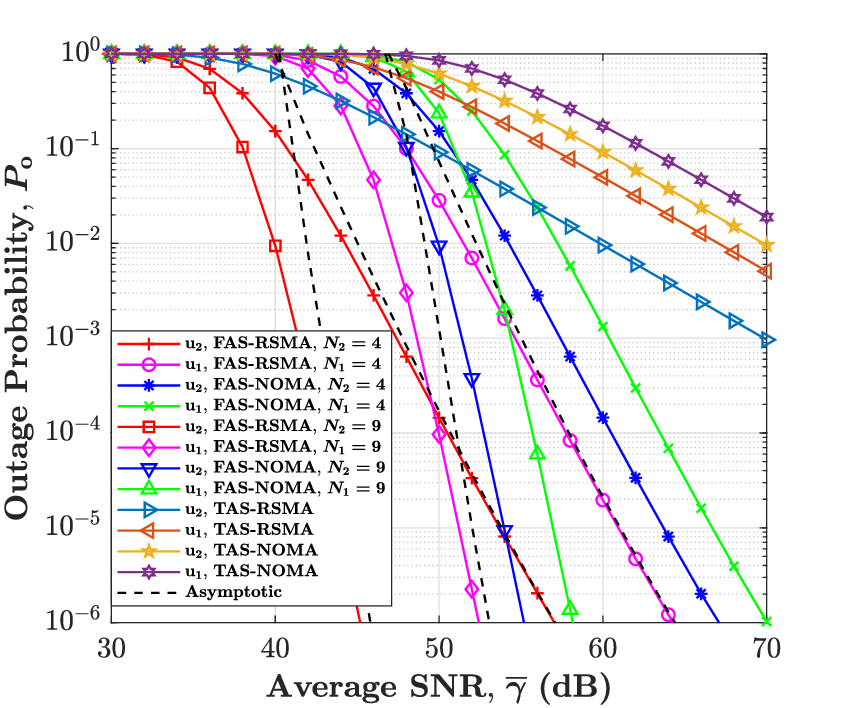}\label{fig_out_n}%
}\vspace{-4mm}\hspace{0cm}
\subfigure[]{%
\includegraphics[width=0.85\columnwidth]{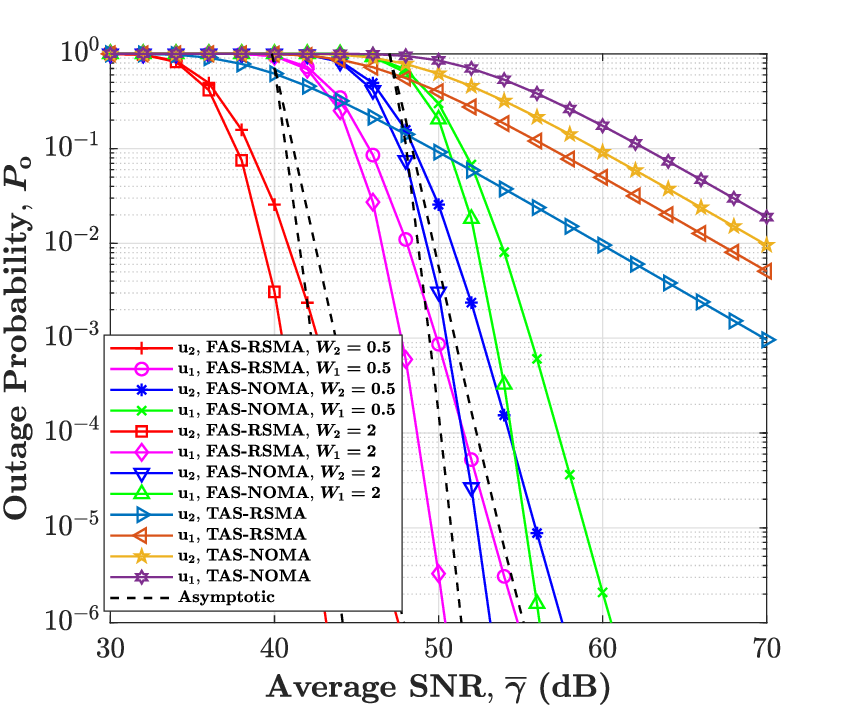}\label{fig_out_w}%
}\hspace{-0.1cm}\vspace{-2mm}
\caption{OP vs. the average transmit SNR $\overline{\gamma}$: (a) for different numbers of fluid antenna ports $N_k$ with a fixed $W_k=1\lambda^2$, and (b) for different values of fluid antenna size with a fixed $N_k=9$.}\label{fig_1}\vspace{-0.3cm}\label{fig_out}
\end{figure}

Among the different configurations, it can be observed that FAS-RSMA consistently stands out as being most effective. For instance in Fig.~\ref{fig_out_n}, at $\overline{\gamma}=56$ dB, FAS-RSMA achieves an OP of approximately $10^{-6}$ for MU $\mathrm{u}_2$ along with $N_2=4$, while under the same setup, the OP of other schemes is much higher. This superior performance can be attributed to the rate-splitting mechanism, which divides messages into common and private components. Clearly, the RSMA scheme allows for a more efficient allocation of resources, allowing the system to optimize communication for both stronger and weaker MUs. Moreover, the FAS enhances adaptability by dynamically adjusting to environmental conditions, thereby improving the signal reception and minimizing interference. As such, FAS-RSMA effectively manages the inter-user interference, allowing robust MUs to assist weaker ones, which further contributes to the reduced OP. In contrast, FAS-NOMA, while also benefiting from a FAS, does not achieve the same performance level as FAS-RSMA. The reliance on power domain multiplexing in NOMA can lead to significant interference, particularly when the MUs experience time-varying channel conditions. This inefficiency in resource utilization hampers FAS-NOMA's robustness in the high SNR regime, making it less effective compared to its RSMA counterpart.

Besides, it can also be observed that TAS-RSMA, which employs a traditional fixed-position antenna system, shows a higher OP compared to FAS-RSMA. This difference highlights the limitations of fixed antenna configurations to leverage spatial diversity in rich-scattering channel conditions. Although TAS-RSMA also utilizes the rate-splitting technique, its performance is hindered by the inability to dynamically allocate resources based on changing channel conditions. While TAS-RSMA benefits from the rate-splitting technique, it is less effective in managing interference. For instance, as observed in Fig.~\ref{fig_out_w}, at $\overline{\gamma}=50$dB, the performance gap between FAS-RSMA and TAS-RSMA is remarkable, with FAS-RSMA achieving an OP of about $10^{-6}$ compared to TAS-RSMA's $10^{-2}$. This demonstrates that FAS-RSMA maintains better performance even at lower SNR levels, thanks to the capability of FAS and its ability to respond to environmental changes.

\section{Conclusion}\label{sec-con}
In this letter, we explored the performance of FAS-aided RSMA in downlink multiuser communications. First of all, we derived an analytical expression for the OP based on the joint CDF of the multivariate $t$-distribution and examined the system's asymptotic behavior in high-SNR conditions. Our findings demonstrated that FAS combined with RSMA delivers notable improvements over TAS and NOMA. These results highlight the effectiveness of integrating fluid antenna technology with advanced communication strategies for significantly enhancing system reliability and performance.

\appendices
\section{Proof of Proposition \ref{pro-dist}}\label{app-pro-dist}
By applying Sklar's theorem from \eqref{eq-sklar} to the CDF definition in \eqref{eq-cdf1}, $F_{g_{\mathrm{fas},k}}\left(g\right)$ can be rewritten as
\begin{align}
F_{g_{\mathrm{fas},k}}\left(g\right) = C\left(F_{g_k^1}\left(g\right),\dots,F_{g_k^{n_k}}\left(g\right);\Theta_k\right), \label{eq-app1}
\end{align}
where $C\left(\cdot\right)$ can be any arbitrary copula and $F_{g_k^{n_k}}\left(g\right)$ is the marginal CDF of the channel gain, which follows exponential distribution due to Rayleigh fading channels, i.e., we have $F_{g_k^{n_k}}\left(g\right)=1-\mathrm{e}^{-g/\overline{g}}$, 
where $\overline{g}=\mathbb{E}\left[g\right]$ is the mean of the channel gain. Following the insights provided in \cite{ghadi2024gaussian}, revealing the fact that elliptical copula can accurately approximate Jakes' model in FAS, we exploit the $t$-student copula, which captures positive/negative and linear/non-linear correlations as well as heavy tail dependencies and it is defined as \cite{sundaresan2011copula}
\begin{align}
C\left(u_1,\dots,u_d\right) = T_{\nu,\mathbf{\Sigma}}\left(t^{-1}_\nu\left(u_1\right),\dots,t^{-1}_\nu\left(u_d\right);\nu,\Theta\right), \label{eq-student}
\end{align}
in which $t_{\nu}^{-1}\left(\cdot\right)$ denotes the quantile function of the univariate $t$-distribution with $\nu$ degrees of freedom,  $T_{\nu,\mathbf{\Sigma}}\left(\cdot\right)$ is the CDF of the multivariate $t$-distribution with correlation matrix $\mathbf{\Sigma}$ and $\nu$ degrees of freedom, and $\Theta\in\left[-1,1\right]$ represents the dependence parameter of $t$-student copula, which measures the correlation between two arbitrary RVs. It should be noted that the $t$-student copula is a more general instance of the elliptical copula, which approaches the Gaussian copula as the degrees of freedom $\nu$ becomes large, i.e., $\nu\rightarrow\infty$. Moreover, it was shown in \cite{ghadi2024gaussian} that the dependence parameter of the elliptical copula such as the Gaussian copula is almost equal to the correlation coefficient derived by Jakes' model, i.e., $\Theta_k\approx \varrho_{n_k,\tilde{n}_k}$. Hence, upon applying \eqref{eq-student} to \eqref{eq-app1}, we have
\begin{align}\notag
&F_{g_{\mathrm{fas},k}}\left(g\right) =\\
& T_{\nu_k,\mathbf{R}_k}\left(t^{-1}\left(F_{g_k^1}\left(g\right)\right),\dots,t^{-1}\left(F_{g_k^{n_k}}\left(g\right)\right);\nu_k,\Theta_k\right). \label{eq-ref3}
\end{align}
Finally, by inserting $F_{g_k^{n_k}}\left(g\right)$ into \eqref{eq-ref3}, \eqref{eq-cdf} is derived. Also, by using the definition of the copula PDF and applying the chain rule to \eqref{eq-student} \cite{ghadi2020copula}, \eqref{eq-pdf} is therefore obtained.


\end{document}